\documentclass[11pt,a4paper]{article}
\usepackage{amsmath,amssymb,amsthm}
\usepackage[margin=1in]{geometry}
\usepackage{graphicx}
\usepackage{hyperref}

\newtheorem{theorem}{Theorem}
\newtheorem{corollary}{Corollary}

\theoremstyle{definition}\newtheorem{remark}{Remark}

\newcommand{\Meff}{M_{\rm eff}}
\newcommand{\Kc}{\mathcal{K}}\newcommand{\zg}{\zeta}

\begin{document}

\title{Dark matter from the quadratic spinor Lagrangian. II.\\
A spin-$3/2$ no-go and the uniqueness of the spin-$1/2$ candidate}
\author{Roh-Suan Tung\\[4pt]
\small\it Department of Physics and Center for Theoretical Physics,\\
\small\it Chung Yuan Christian University, Taoyuan 320, Taiwan\\
\small\it Institute of Advanced Studies, Nanyang Technological University, 639673, Singapore\\[2pt]
\small\texttt{rstung@cycu.edu.tw}}
\date{ }
\maketitle

\begin{abstract}
The composite Quadratic Spinor Lagrangian (QSL) propagates a spin-$\tfrac12$
Dirac fermion whose mass is generated geometrically by cosmological trace
torsion~\cite{TungCQG2026,TungLetter}. It is natural to ask whether promoting
the spinor $1$-form $\Psi_\mu$ to an \emph{independent} Dirac-vector field
yields a genuine spin-$\tfrac32$ dark-matter candidate. We prove that it does
not. Three results combine into a no-go theorem. First, the torsional term,
computed exactly by Clifford reduction, is a frame-aligned mass confined to
the time-component sector---not a uniform spin-$\tfrac32$ mass. Second, the
second-order form $2\,D\Psi\,\gamma_5\,D\Psi$ has identically vanishing kinetic
and cross terms for the independent field: as a component expression it is the
boundary part of the spinor--curvature identity and carries no bulk dynamics.
Third, the genuine dynamics therefore reside in the curvature side of that
identity, $S=-\!\int\!\bar\psi\psi\,R\sqrt{-g}$, in which the metric
$g=\Psi\otimes_S\Psi$ and the scalar $\bar\psi\psi$ are \emph{both} composites
of $\Psi$; the second variation consequently factors,
$\delta^2S=\mathcal Q(h_{\mu\nu}[\delta\Psi],\delta\Phi[\delta\Psi])$, through
the linearized metric and a scalar, both massless. Every propagating pole
therefore lies on the metric light cone $k^2=0$---the graviton and a
scalar---and no massive spin-$\tfrac32$ mode exists, on any background. This is
the dynamical completion of the kinematic fact that the composite spinor
$1$-form has no spin-$\tfrac32$ part, and it establishes the composite
spin-$\tfrac12$ Dirac fermion as the unique propagating matter excitation, and
the unique dark-matter candidate, of the QSL. Through the super-$SL(2,\mathbb C)$
structure this surviving mode is naturally read as the Goldstino of the local
supersymmetry broken by the metric condensate---a composite, gravitational
descendant of the light-gravitino dark matter of Fayet and of Pagels and Primack.
\end{abstract}

\noindent
\textit{Notation.} Mostly-plus signature $\eta_{ab}=\mathrm{diag}(-,+,+,+)$,
$\gamma_{(a}\gamma_{b)}=\eta_{ab}$ so $\gamma_0^2=-1$;
$\gamma_5=\gamma_0\gamma_1\gamma_2\gamma_3$;
$\gamma^{ab}=\gamma^{[a}\gamma^{b]}$. Greek indices are spacetime, lower-case
Latin are orthonormal-frame. $\Psi_\mu$ denotes the independent vector-spinor;
$\zg\equiv\gamma^\mu\Psi_\mu$ the gamma-trace spinor;
$\Psi^{(3/2)}_\mu=\Psi_\mu-\tfrac14\gamma_\mu\zg$ the gamma-traceless part.
We write $D$ for the covariant exterior derivative and $\mathring D$ for its
Levi-Civita part. The scalar condensate amplitude of~\cite{TungCQG2026} is
denoted $\varphi(t)$ here to avoid collision with $\Psi$'s components.

\section{Introduction}
\label{sec:intro}
The Quadratic Spinor Lagrangian (QSL) formulation of general
relativity~\cite{NesterTung1995,TungJacobson1995} represents the
gravitational field by a Clifford-algebra-valued $1$-form
$\Psi=\vartheta^a\gamma_a\psi$ built from a tetrad and a single Dirac spinor,
with action $\mathcal L_{\rm QSL}=2\,D\Psi\,\gamma_5\,D\Psi$. In the
cosmological application of Paper~I~\cite{TungCQG2026}, an Einstein--Cartan
extension of this formalism supplies, geometrically, the mass that the
gravitational-wave freeze-in mechanism of Maleknejad and
Kopp~\cite{MaleknejadKopp2026} must otherwise postulate: a cosmological spinor
condensate sources a vectorial trace torsion $K\propto\dot\varphi/\varphi$,
and a Clifford reduction of the torsional term gives the propagating fermion a
pure Dirac mass $\Meff=(1/\sqrt6)\,|\dot\varphi/\varphi|$, locked to the Hubble
rate at production.

A central structural observation of that work is that the composite spinor
$1$-form is \emph{purely} spin-$\tfrac12$: its gamma-traceless part vanishes
identically,
\begin{equation}
  \Psi^{(3/2)}_\mu=\gamma_\mu\psi-\tfrac14\gamma_\mu(\gamma^\nu\gamma_\nu)\psi=0,
  \label{eq:spin32_vanishes}
\end{equation}
because $\gamma_\mu\psi$ lies by construction in the image of the gamma-trace
map. The propagating excitation is therefore a Dirac fermion, with the same
content as the produced Weyl fermion. This raised an immediate question, left
open in Paper~I~\cite{TungCQG2026}: if one \emph{frees} $\Psi_\mu$ from the
composite form $\gamma_\mu\psi$---promoting it to an independent
Dirac-vector field with components beyond $\gamma_\mu\psi$---does the QSL then
furnish a genuine, propagating spin-$\tfrac32$ field, and hence a
higher-spin gravitational dark-matter candidate?

There were reasons to hope so. The QSL field equation $D^2\Psi=0$ is second
order with a metric-diagonal principal symbol, so its characteristic cone is
the metric cone for the full reducible multiplet---unlike the first-order
Rarita--Schwinger operator whose principal symbol can degenerate and produce
the Velo--Zwanziger acausality~\cite{VeloZwanziger1969}. The trace torsion is
self-sourced and co-scaled with the mass rather than externally dialed, as in
the causally propagating gravitino of minimal
supergravity~\cite{DeserZumino1976}. And the same mass locking that fixes the
spin-$\tfrac12$ relic appeared, under a working assumption, to keep a
helicity-$\tfrac12$ sound speed strictly positive, evading the
catastrophic-production pathology of light spin-$\tfrac32$ dark
matter~\cite{KolbLongMcDonough2021,KalloshKofmanLindeVanProeyen2000}.

In this paper we settle the question, and the answer is negative. The
independent spin-$\tfrac32$ field does not propagate. We organize the proof as
three pillars. Pillar~I (Sec.~\ref{sec:mass}) computes the torsional mass for
the independent field and finds it is \emph{not} the uniform
$\Meff^2\,\bar\Psi_\mu\Psi^\mu$ mass that a healthy massive spin-$\tfrac32$
field would require, but a frame-aligned operator supported only on the
time-component sector. Pillar~II (Sec.~\ref{sec:kinetic_vanish}) shows that the
naive second-order action contributes no kinetic term at all for the
independent field---its component form is the exact boundary term of the
spinor--curvature identity. Pillar~III (Sec.~\ref{sec:factorization}) locates
the genuine dynamics on the curvature side of that identity and proves that the
second variation factors through the linearized metric and a scalar, both
massless. These combine in Sec.~\ref{sec:nogo} into the no-go theorem: every
propagating mode lies at $k^2=0$, and no massive spin-$\tfrac32$ excitation
exists. Sec.~\ref{sec:constraints} records the constraint-sector corollary,
Sec.~\ref{sec:DM} draws the dark-matter consequence---the composite
spin-$\tfrac12$ fermion is the unique QSL candidate---and
Sec.~\ref{sec:discussion} discusses scope and the relation to supergravity.

\section{The independent vector-spinor and the QSL action}
\label{sec:setup}
Let $\Psi_\mu$ be an independent Dirac-spinor-valued $1$-form: a Dirac spinor
for each spacetime index $\mu$, sixteen complex components in all. Under the
Lorentz group it carries the reducible content
\begin{equation}
  \Bigl[(\tfrac12,0)\oplus(0,\tfrac12)\Bigr]\otimes(\tfrac12,\tfrac12)
  =\underbrace{(\tfrac12,0)\oplus(0,\tfrac12)}_{\text{spin-}1/2}
  \oplus\underbrace{(1,\tfrac12)\oplus(\tfrac12,1)}_{\text{spin-}3/2},
\end{equation}
and decomposes covariantly as
\begin{equation}
  \Psi_\mu=\Psi^{(3/2)}_\mu+\tfrac14\gamma_\mu\zg,\qquad
  \zg=\gamma^\nu\Psi_\nu,\qquad \gamma^\mu\Psi^{(3/2)}_\mu=0 .
  \label{eq:split}
\end{equation}
The composite theory sets $\Psi_\mu=\gamma_\mu\psi$, for which $\zg=4\psi$ and
$\Psi^{(3/2)}_\mu=0$ by~\eqref{eq:spin32_vanishes}; here we retain all sixteen
components.

The action is the QSL $4$-form,
$\mathcal L=2\,D\Psi\,\gamma_5\,D\Psi$, which by the spinor--curvature
identity~\cite{NesterTungZhytnikov1994} equals the Einstein--Hilbert density up
to a boundary term,
\begin{equation}
  2\,D\Psi\,\gamma_5\,D\Psi
  \equiv -\,\bar\psi\psi\;R\,{*1}
  + d\bigl[(D\Psi)\gamma_5\Psi+\Psi\gamma_5(D\Psi)\bigr].
  \label{eq:identity}
\end{equation}
Two composite objects built from $\Psi$ carry the dynamics. The metric is the
symmetrized bilinear~\cite{TungJacobson1995}
\begin{equation}
  g_{\mu\nu}=\bar\Psi_{(\mu}\Psi_{\nu)},
  \label{eq:metric}
\end{equation}
so that a perturbation of $\Psi$ induces a metric perturbation
\begin{equation}
  \delta g_{\mu\nu}=\bar\Psi_{(\mu}\,\delta\Psi_{\nu)}
  +\overline{\delta\Psi}_{(\mu}\,\Psi_{\nu)};
  \label{eq:metric_perturbation}
\end{equation}
and the scalar prefactor in~\eqref{eq:identity} is
\begin{equation}
  \Phi\equiv\bar\psi\psi,\qquad \psi=\tfrac14\gamma^\mu\Psi_\mu=\tfrac14\zg ,
  \label{eq:Phi}
\end{equation}
the natural identification of ``$\psi$'' for the independent field, reducing
to the composite spinor when $\Psi_\mu=\gamma_\mu\psi$. That both $g_{\mu\nu}$
and $\Phi$ are functionals of the single field $\Psi$ is the structural fact
underlying the no-go.

\section{Pillar I: the torsional mass is not a spin-$\tfrac32$ mass}
\label{sec:mass}
In the Einstein--Cartan extension the condensate sources a vectorial trace
torsion $K_a=K\,\delta^0_a$, $K\propto\dot\varphi/\varphi$, with contorsion
$\Kc=\tfrac14 K_{ab}\gamma^{ab}$ whose frame components are
$\Kc_e=\tfrac16 K_a\,\gamma^a{}_e$~\cite{TungCQG2026}. The mass-like term of
the QSL is the purely torsional, derivative-free piece
$\mathcal L_{K^2}=2\,\Kc\Psi\,\gamma_5\,\Kc\Psi$.

\paragraph{Composite check.}
For $\Psi_\mu=\gamma_\mu\psi$ the Clifford reduction
(Appendix~\ref{app:clifford}) gives the pure scalar
\begin{equation}
  \mathcal S\equiv\tfrac12\,\epsilon^{pqmc}\,\overline{\Xi_{pq}}\,\gamma_5\,\Xi_{mc}
  =\tfrac23\,K^aK_a\,\mathbf 1,\qquad \Xi=\Kc\,\Psi ,
\end{equation}
with all pseudoscalar, vector, axial and tensor channels vanishing; reading the
coefficient against $\bar\Psi_\mu\Psi^\mu=4\bar\psi\psi$ yields the Dirac mass
$\Meff^2=\tfrac16(\dot\varphi/\varphi)^2$ ($\alpha=\tfrac16$) reported
in~\cite{TungCQG2026}.

\paragraph{Independent field.}
Retaining all components, the mass is the bilinear
$\mathcal L_{K^2}=\bar\Psi_r M^{rs}\Psi_s$ with kernel
(Appendix~\ref{app:clifford})
\begin{equation}
  M^{rs}\ \propto\ u_a\bigl(u^r\gamma^{a s}-u^s\gamma^{a r}\bigr),
  \qquad u^a=\delta^a_0 ,
  \label{eq:framealigned_mass}
\end{equation}
the trace-torsion direction. Explicitly $M^{00}=0$, $M^{ij}=0$, and only the
time--space blocks $M^{0i}=\kappa\,\gamma^{0i}$ are nonzero. This is
\emph{neither} the uniform mass $\Meff^2\,\eta^{rs}\mathbf 1$ that a genuine
massive spin-$\tfrac32$ field carries \emph{nor} the Rarita--Schwinger form
$\bar\Psi_\mu\gamma^{\mu\nu}\Psi_\nu$. Two consequences follow at once and are
used below: the gamma-traceless components, having $\Psi_0=0$ in their
transverse polarizations, receive \emph{zero} torsional mass; and the entire
torsional mass acts through the time component $\Psi_0$, which---as we now
recall---is the non-dynamical sector of the theory. (A consistency check:
$\gamma_r M^{rs}\gamma_s$ recovers the composite scalar $\tfrac23 K^aK_a$.)

\section{Pillar II: the naive action has no kinetic term}
\label{sec:kinetic_vanish}
Written in components, the QSL $4$-form is
$\mathcal L=\tfrac12\epsilon^{pqmc}\,\overline{(D\Psi)_{pq}}\,\gamma_5\,(D\Psi)_{mc}$
with $(D\Psi)_{\mu\nu}=\hat D_\mu\Psi_\nu-\hat D_\nu\Psi_\mu$ and
$\hat D_\mu=\partial_\mu+\Kc_\mu$. In momentum space $\partial_\mu\to in_\mu$,
and one may order the resulting quadratic form by powers of the torsion. The
purely kinetic part ($K^0$) is
\begin{equation}
  \tfrac12\,\epsilon^{pqmc}\,\overline{(n_p\Psi_q-n_q\Psi_p)}\,\gamma_5\,
  (n_m\Psi_c-n_c\Psi_m)\ \propto\ \epsilon^{pqmc}\,n_p n_m\,(\cdots)\equiv 0,
  \label{eq:kinvanish}
\end{equation}
since $\epsilon^{pqmc}$ is antisymmetric in $(p,m)$ while $n_p n_m$ is
symmetric. A direct evaluation (Appendix~\ref{app:vanish}) shows the cross term
($K^1$) likewise vanishes identically, by the same antisymmetry combined with
the symmetry of the $\overline{(\,\cdot\,)}\gamma_5$ bilinear; only the
derivative-free $K^2$ mass of Sec.~\ref{sec:mass} survives.

This is not an accident but the content of the identity~\eqref{eq:identity}:
the component $4$-form $\epsilon\,\overline{D\Psi}\,\gamma_5\,D\Psi$ is exactly
the boundary term $d[\cdots]$, which carries no bulk quadratic dynamics, while
the genuine Lagrangian is the curvature density $-\bar\psi\psi\,R$. For the
\emph{composite} field this is invisible, because the propagating degree of
freedom is the graviton built from the frame; but for an independent $\Psi_\mu$
treated on a fixed background it means the naive action does \emph{not} define
a propagating field. The kinetic dynamics must come from $-\bar\psi\psi\,R$
through the frame relation~\eqref{eq:metric}---which is precisely the
structure exploited next.

\section{Pillar III: the factorization of the second variation}
\label{sec:factorization}
By Pillars~I--II the dynamics of $\Psi_\mu$ resides entirely in
\begin{equation}
  S[\Psi]=-\!\int\Phi\,R[g]\,\sqrt{-g}\,d^4x+(\text{boundary}),
  \label{eq:Saction}
\end{equation}
with $g_{\mu\nu}$ and $\Phi$ the composites~\eqref{eq:metric},\eqref{eq:Phi}.

\paragraph{Background and fluctuation.}
Let $\Psi^0_\mu=\gamma_\mu\psi_0$, $\bar\psi_0\psi_0=1$, be the composite
background, so $g^0_{\mu\nu}=\eta_{\mu\nu}$, $\Phi_0=1$, $R_0=0$; this is the
QSL realization of Minkowski space and a vacuum solution. Writing
$\Psi_\mu=\Psi^0_\mu+\delta\Psi_\mu$, the two composites respond to linear
order as
\begin{align}
  h_{\mu\nu}&\equiv\delta g_{\mu\nu}
   =\bar\Psi^0_{(\mu}\delta\Psi_{\nu)}+\overline{\delta\Psi}_{(\mu}\Psi^0_{\nu)},
   \label{eq:hmap}\\
  \delta\Phi&=\bar\psi_0\,\delta\psi+\overline{\delta\psi}\,\psi_0,
   \qquad \delta\psi=\tfrac14\gamma^\mu\delta\Psi_\mu .
   \label{eq:dPhimap}
\end{align}
Note $\delta\Phi$ depends only on the gamma-trace $\zg$, i.e.\ on the
spin-$\tfrac12$ sector, and is blind to $\Psi^{(3/2)}_\mu$.

\begin{theorem}[Factorization]
\label{thm:factorization}
About the background $\Psi^0$, the quadratic action depends on $\delta\Psi$
only through the linearized metric $h_{\mu\nu}$ and the scalar $\delta\Phi$:
\begin{equation}
  \delta^2 S=\mathcal Q\bigl(h_{\mu\nu}[\delta\Psi],\,\delta\Phi[\delta\Psi]\bigr)
  =-\!\int\!\Bigl[\Phi_0\,\delta^2(R\sqrt{-g})+2\,\delta\Phi\,\delta R\Bigr],
  \label{eq:factorization}
\end{equation}
where $\delta R=\partial^\mu\partial^\nu h_{\mu\nu}-\Box\,h^\alpha{}_\alpha$.
\end{theorem}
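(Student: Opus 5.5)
The plan is to treat $S[\Psi]$ in \eqref{eq:Saction} as a composition $S=F\circ G$. Here $G:\Psi\mapsto(g_{\mu\nu}[\Psi],\Phi[\Psi])$ is the composite map of \eqref{eq:metric},\eqref{eq:Phi}, and $F[g,\Phi]=-\int\Phi\,R[g]\sqrt{-g}\,d^4x$. Expanding $F\circ G$ to second order in $\delta\Psi$ with the chain rule gives two kinds of terms. The first is $F''(G'\delta\Psi,G'\delta\Psi)$, which by construction depends on $\delta\Psi$ only through $G'\delta\Psi=(h_{\mu\nu},\delta\Phi)$ as given by \eqref{eq:hmap},\eqref{eq:dPhimap}. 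The second is $F'\cdot G''(\delta\Psi,\delta\Psi)$, which involves the second-order pieces $\delta^2 g_{\mu\nu}=2\,\overline{\delta\Psi}_{(\mu}\delta\Psi_{\nu)}$ and $\delta^2\Phi=2\,\overline{\delta\psi}\,\delta\psi$. Both of these are quadratic in $\delta\Psi$ but are \emph{not} functions of $(h,\delta\Phi)$. The whole theorem therefore comes down to showing that this second term vanishes, i.e.\ that $F'$ vanishes at the background.

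First step: check that $(g^0,\Phi_0)=(\eta,1)$ is a critical point of $F$. This is where the hypothesis that $\Psi^0$ is a vacuum solution enters. I would compute the two first variations separately. With respect to $\Phi$, $\delta_\Phi F=-\int\delta\Phi\,R_0\sqrt{-g_0}$, which vanishes because $R_0=0$. With respect to $g$, $\delta_g F=-\int\Phi_0\bigl(G_{\mu\nu}[g_0]+g_{0\mu\nu}\Box\Phi_0-\nabla_\mu\nabla_\nu\Phi_0\bigr)\delta g^{\mu\nu}$ up to boundary terms. This also vanishes, because $g_0$ is flat and $\Phi_0$ is constant. Hence $F'\cdot G''=0$, and $\delta^2S=F''(h,\delta\Phi)$. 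Pillar~II has already sent all boundary contributions into the discarded surface term, and I would assume that compactly supported fluctuations make them drop here as well.

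Second step: expand $F''$ explicitly as the Hessian of $-\Phi R\sqrt{-g}$. There are three blocks. The $(\Phi,\Phi)$ block is zero because $F$ is linear in $\Phi$. The $(g,g)$ block is $-\Phi_0\,\delta^2(R\sqrt{-g})$ evaluated on $h$. The mixed block is $-2\,\delta\Phi\,\delta(R\sqrt{-g})$, where the factor $2$ is the symmetric-combinatorics factor of the second variation. Since $R_0=0$, we have $\delta(R\sqrt{-g})=\sqrt{-g_0}\,\delta R$. The standard linearization of the Ricci scalar around flat space then gives $\delta R=\partial^\mu\partial^\nu h_{\mu\nu}-\Box h^\alpha{}_\alpha$. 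Together these reproduce \eqref{eq:factorization}.

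The main obstacle is the normalization of the second variation, together with the interpretation of ``depends only through $h$''. The factor $2$ in the cross term is only correct if $\delta^2$ means the full second derivative rather than the coefficient of $\epsilon^2/2$. I would fix this convention once, so that the $\Phi_0\delta^2(R\sqrt{-g})$ term carries the same convention. Separately, I would stress that because of the background-criticality step, $G''$, which is the only place where $\delta\Psi$ could enter in a genuinely spinorial way (including $\Psi^{(3/2)}$-bilinears), drops out entirely. I expect the subtle point to be whether the first variation in $g$ needs on-shell integration by parts, which produces the $\Box\Phi$ terms. Those terms vanish only because $\Phi_0$ is exactly constant on the composite background with $\bar\psi_0\psi_0=1$.
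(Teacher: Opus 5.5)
Your proposal is correct and follows essentially the same route as the paper's proof. Both apply the chain rule through $(g[\Psi],\Phi[\Psi])$, kill the intrinsic $F'\cdot G''$ terms using $G^{\mu\nu}_0=0$ and $R_0=0$ on the flat, constant-$\Phi_0$ background, drop the $(\Phi,\Phi)$ Hessian by linearity in $\Phi$, and keep the metric Hessian plus the mixed term $-2\,\delta\Phi\,\delta R$. Your added checks — that the $\Box\Phi_0$ and $\nabla_\mu\nabla_\nu\Phi_0$ pieces vanish, and that the factor $2$ follows the full-second-derivative convention of Eq.~\eqref{eq:chain} — are harmless refinements that the paper leaves implicit.
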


\begin{proof}
Regard $S=\mathcal S[g,\Phi]$ and apply the chain rule:
\begin{equation}
  \delta^2 S=\!\int\!\Bigl[\tfrac{\delta\mathcal S}{\delta g}\,\delta^2 g
  +\tfrac{\delta\mathcal S}{\delta\Phi}\,\delta^2\Phi\Bigr]
  +\!\int\!\Bigl[\delta g\,\tfrac{\delta^2\mathcal S}{\delta g\,\delta g}\,\delta g
  +2\,\delta g\,\tfrac{\delta^2\mathcal S}{\delta g\,\delta\Phi}\,\delta\Phi
  +\delta\Phi\,\tfrac{\delta^2\mathcal S}{\delta\Phi\,\delta\Phi}\,\delta\Phi\Bigr].
  \label{eq:chain}
\end{equation}
On the background the intrinsic pieces vanish:
$\delta\mathcal S/\delta g_{\mu\nu}=-\Phi_0 G^{\mu\nu}_0\sqrt{-g_0}=0$ since
$\Psi^0$ is a vacuum solution, and $\delta\mathcal S/\delta\Phi=-R_0\sqrt{-g_0}=0$.
The pure-$\Phi$ Hessian vanishes because $\mathcal S$ is linear in $\Phi$. The
survivors are the metric Hessian $\Phi_0\,\delta^2(R\sqrt{-g})$ and the mixed
term $-2\,\delta\Phi\,\delta(R\sqrt{-g})=-2\,\delta\Phi\,\delta R$, both built
from $h$ and $\delta\Phi$ alone.
\end{proof}

The functional~\eqref{eq:factorization} is the quadratic action of linearized
gravity coupled to a scalar through the universal vertex $\delta\Phi\,\delta R$:
the first term is the Lichnerowicz operator
$\int h^{\mu\nu}\mathcal G^{(1)}_{\mu\nu}(h)$, the second the scalar--tensor
(Brans--Dicke) mixing, diagonalized by a linear conformal shift
$h_{\mu\nu}\to h_{\mu\nu}+c\,\eta_{\mu\nu}\delta\Phi$ into a massless spin-$2$
field and a massless scalar. The property we use is that \emph{every term of
$\mathcal Q$ carries two derivatives}: its principal symbol is $k^2\mathbf 1$
and it has no piece of zeroth order in $k$.

\section{The no-go theorem}
\label{sec:nogo}
\begin{theorem}[No propagating spin-$\tfrac32$]
\label{thm:nogo}
The independent vector-spinor governed by the QSL has no propagating massive
mode. Every pole of its quadratic action lies on the metric light cone
$k^2=0$; the on-shell content is the graviton and a scalar, and the
gamma-traceless components $\Psi^{(3/2)}_\mu$ are non-dynamical. In particular
no spin-$\tfrac32$ excitation propagates.
\end{theorem}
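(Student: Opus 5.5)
The plan is to combine the three pillars through Theorem~\ref{thm:factorization}, reading the quadratic action as $\mathcal Q\circ L$. Here $L:\delta\Psi\mapsto(h_{\mu\nu},\delta\Phi)$ is the linear, derivative-free map defined by \eqref{eq:hmap}--\eqref{eq:dPhimap}. First we fix the bookkeeping. Pillar~II removes the $K^0$ and $K^1$ parts of the naive action, and Pillar~I leaves only a $K^2$ piece supported on $\Psi_0$. So the full quadratic operator acting on $\delta\Psi$ is $L^\dagger\,\mathcal Q(k)\,L$, plus the frame-aligned block $M^{0i}$ when torsion is on. We first treat the Minkowski background with $K=0$. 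In momentum space, $L$ is a constant $16\times 11$ matrix: it sends the sixteen complex components to ten metric components plus one scalar. $\mathcal Q(k)$ is homogeneous of degree two in $k$, with principal symbol $k^2\mathbf 1$ once the Brans--Dicke shift $h_{\mu\nu}\to h_{\mu\nu}+c\,\eta_{\mu\nu}\delta\Phi$ has been applied.

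Second, we locate the poles. The kinetic matrix of $\delta\Psi$ is $L^\dagger\mathcal Q(k)L$. On $\ker L$ it vanishes identically, so those components carry no quadratic term at all and are pure gauge or auxiliary, not propagating. On the complement, the propagator is obtained by inverting $\mathcal Q(k)$ on $\operatorname{im}L$, after gauge-fixing linearized diffeomorphisms in the usual way. Since $\mathcal Q$ contains no zeroth-order term in $k$, its determinant on the gauge-fixed space is a power of $k^2$. Hence every pole sits at $k^2=0$, and the residues are those of a massless spin-2 field and a massless scalar. No mass scale enters, because $L$ is $k$-independent and $\mathcal Q$ is purely two-derivative.

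Third, we show that $\Psi^{(3/2)}_\mu$ is non-dynamical. Equation \eqref{eq:dPhimap} shows $\delta\Phi$ is blind to $\Psi^{(3/2)}$. So the gamma-traceless part enters only through $h_{\mu\nu}$, that is, through the ten real combinations $\operatorname{Re}\bar\psi_0\gamma_{(\mu}\delta\Psi^{(3/2)}_{\nu)}$. We then argue as follows.
\begin{itemize}
\item Any component of $\Psi^{(3/2)}$ with vanishing image under these combinations lies in $\ker L$ and has no action.
\item The components with nonvanishing image are indistinguishable from metric perturbations, so their "propagation" is just the graviton's and they carry no independent spin-$3/2$ pole.
\end{itemize}
For the time-component block we invoke the observation from Pillar~I. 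The transverse gamma-traceless polarizations have $\Psi_0=0$ and receive no $M^{rs}$ contribution, and the $K^2$ term is derivative-free in $\Psi_0$. Consequently, with torsion on, it only modifies constraint (auxiliary) equations and cannot create a pole at $k^2=m^2\neq0$.

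The main obstacle is this last step. The torsional $M^{0i}$ term is zeroth order in $k$, and it couples $\Psi_0$ to $\Psi_i$. Naively, integrating out $\Psi_0$ could then produce a $k$-independent mass term for $\Psi_i$ in the effective operator $L^\dagger\mathcal Q L$. To handle this, I would first check whether $\Psi_0$ appears quadratically at all. $M^{00}=0$ and $M^{ij}=0$, so it does not appear quadratically, and $\Psi_0$ therefore acts as a Lagrange multiplier. Its equation then imposes the linear constraint $\gamma^{0i}\Psi_i=0$ rather than generating a mass; this is the constraint-sector statement deferred to Sec.~\ref{sec:constraints}. The remaining $\Psi_i$ are governed by $\mathcal Q\circ L$ restricted to that constraint surface, which is again homogeneous of degree two. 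Since the factorization holds about any vacuum background by the same chain-rule argument, the conclusion extends beyond Minkowski space to any background.
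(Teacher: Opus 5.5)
Your architecture matches the paper's proof: factor through the algebraic map $L=(T,p)$, write $\mathcal K(k)=L^\dagger\mathcal Q(k)L+\mathcal M$, let $\ker L$ carry no action, absorb the image into the massless graviton and scalar, and invoke degree-two homogeneity. The gap is in the step you yourself single out.

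\textbf{The Lagrange-multiplier step fails.} From $M^{00}=M^{ij}=0$ you infer that $\Psi_0$ ``does not appear quadratically at all''. You then treat it as a Lagrange multiplier whose equation is $\gamma^{0i}\Psi_i=0$. But $M^{00}=0$ describes only the mass term, and $\Psi_0$ is not in $\ker L$. It enters $L^\dagger\mathcal Q L$ in two ways.
\begin{itemize}
\item It feeds the shift, $h_{0i}\ni\tfrac12(\bar\psi_0\gamma_i\delta\Psi_0+\mathrm{c.c.})$, which enters linearized Einstein--Hilbert through a term $\propto h_{0i}(\vec k^{2}\delta_{ij}-k_ik_j)h_{0j}$.
\item It feeds $\delta\Phi\ni\tfrac14(\bar\psi_0\gamma^0\delta\Psi_0+\mathrm{c.c.})$, which multiplies $\delta R\ni\partial_0^2h_{ii}$. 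So $\Psi_0$ even acquires time derivatives.
\end{itemize}
The $\bar\Psi_0$ equation is therefore $(L^\dagger\mathcal Q L\,\delta\Psi)_0+M^{0i}\delta\Psi_i=0$, not a constraint. The reduction to ``$\mathcal Q\circ L$ on a constraint surface'' does not go through.

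Once the directions on which $\mathcal M$ acts also carry kinetic entries, a zeroth-order off-diagonal block can move zeros off the cone. For example, $\begin{pmatrix}a k^2&m\\ m&c k^2\end{pmatrix}$ has determinant $ac\,k^4-m^2$, which vanishes at $k^2=\pm m/\sqrt{ac}$; for $ac>0$ one root is timelike. Homogeneity of the kinetic part therefore cannot dispose of $\mathcal M$ by itself.

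The paper does not claim $\Psi_0$ is absent from the kinetic form. Its step (ii) places the $\mathcal M$-directions (time component and gamma-trace) in $h_{0\mu}$ and $\delta\Phi$. It argues these are eliminated by their algebraic constraint/gauge equations rather than producing poles. It backs this with the Dirac--Bergmann count and with the mode scan of Appendix~\ref{app:factor}, where including $\mathcal M$ leaves the null-dimension pattern unchanged. Your version needs an input of that kind, e.g.\ a direct check that $\det\mathcal K(k)$ on a gauge-fixed slice stays proportional to a power of $k^2$ with $\mathcal M$ switched on.

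\textbf{Further points.} Even granting a pure multiplier, your fix would prove less than the statement. The constraint $\gamma^{0i}\Psi_i=0$ singles out the torsion direction $u^a$. The restricted operator is then homogeneous of degree two, so there is no mass scale ($\kappa$ drops out). But its determinant may depend on $u\cdot k$ as well as on $k^2$. You would get ``no massive pole'', not ``every pole on $k^2=0$''.

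Your closing extension to ``any background'' overreaches. The factorization needs $G^{\mu\nu}_0=0=R_0$. On curved Ricci-flat backgrounds only a principal-symbol statement survives. On non-vacuum backgrounds, Sec.~\ref{sec:curved} shows extra algebraic terms, including a pressure-induced mass on the transverse $\Psi^{(3/2)}$.

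Finally, a bookkeeping point for your second step. With the paper's conventions, $\eta^{\mu\nu}h_{\mu\nu}=\bar\psi_0\zeta+\bar\zeta\psi_0=4\,\delta\Phi$. So $L$ is a real-linear map of rank $10$, not a map onto ten metric components plus an independent scalar. On $\operatorname{im}L$ a linearized diffeomorphism shifts $\delta\Phi$ along with $\operatorname{tr}h$. Only transverse ones ($\partial\cdot\xi=0$) remain symmetries of $\mathcal Q$, so ``gauge-fixing in the usual way'' needs rechecking there.
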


\begin{proof}
By Theorem~\ref{thm:factorization} the inverse propagator is
\begin{equation}
  \mathcal K(k)=T^\dagger\,\mathcal E(k)\,T+\mathcal M,
  \label{eq:Ktot}
\end{equation}
where $T:\delta\Psi\mapsto(h,\delta\Phi)$ is the algebraic
map~\eqref{eq:hmap}--\eqref{eq:dPhimap}, $\mathcal E(k)$ the kernel of
$\mathcal Q$, and $\mathcal M$ the torsional mass of Sec.~\ref{sec:mass}.
(i)~The kinetic part is massless: $\mathcal E(k)$ has principal symbol
$k^2\mathbf 1$ and no zeroth-order-in-$k$ piece, so $T^\dagger\mathcal E(k)T$
is homogeneous of degree two in $k$ and its characteristic surfaces lie in
$\{k^2=0\}$; there is no timelike characteristic surface. (ii)~The mass cannot
create one: $\mathcal M$ is $k$-independent and supported on the
time/trace sector, $\mathcal M^{00}=\mathcal M^{ij}=0$, $\mathcal M^{0i}\neq0$
[Eq.~\eqref{eq:framealigned_mass}]; adding a $k$-independent term to a
degree-two symbol shifts no root off the cone, because the directions on which
$\mathcal M$ acts---the gamma-trace and time components---enter the kinetic
part only through $\delta\Phi$ and $h_{0\mu}$, the constraint/gauge data of
$\mathcal Q$, whose algebraic equations of motion eliminate them rather than
producing a pole. Hence every propagating pole lies at $k^2=0$. Diagonalizing
$\mathcal Q$, the on-shell modes are the two graviton polarizations and one
scalar, all massless; a gamma-traceless fluctuation contributes nothing to
$\delta\Phi$ and only the bilinear~\eqref{eq:hmap} to $h$, so its part in
$\ker T$ carries no quadratic action (pure gauge) and its remainder is absorbed
in the massless graviton. No combination propagates as a massive
spin-$\tfrac32$ particle.
\end{proof}

\paragraph{From characteristic surfaces to a degree-of-freedom count.}
The principal-symbol step shows that no characteristic surface propagates off
the metric cone; on its own this is necessary but not sufficient, since a
constrained system can carry non-dynamical data sharing that symbol. The
degree-of-freedom count is supplied by the Dirac--Bergmann analysis of
Sec.~\ref{sec:constraints}: on the cosmological slice the time component
$\Psi_0$ enters without a time derivative and acts as a Lagrange multiplier,
its variation yielding a primary constraint whose preservation generates the
secondary constraint that removes the would-be lower-spin excitation. Solving
these eliminates the time and gamma-trace components algebraically, leaving
exactly the two graviton polarizations and the scalar---no propagating
spin-$\tfrac32$ degree of freedom. This is the standard Hamiltonian counting of
Poincar\'e gauge theory~\cite{BlagojevicNikolic1983,YoNester1999}, here rendered
benign by the absence of an $R^2$ kinetic term: the contorsion is algebraic, so
torsion carries no independent modes and the constraint algebra closes. The
symbol result and the constraint count are therefore consistent---the former
fixes the luminal characteristic cone, the latter the vanishing massive
content.

\begin{corollary}[Spin-$\tfrac12$ uniqueness]
\label{cor:unique}
The composite reduction $\Psi^{(3/2)}_\mu\equiv0$
[Eq.~\eqref{eq:spin32_vanishes}] is not merely kinematic: even with $\Psi_\mu$
freed to an independent field, no spin-$\tfrac32$ degree of freedom propagates.
The composite spin-$\tfrac12$ Dirac fermion of Paper~I~\cite{TungCQG2026} is the
unique propagating matter excitation of the QSL, and its unique dark-matter
candidate.
\end{corollary}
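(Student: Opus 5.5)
The corollary is assembled from results already in hand, so the plan is to chain them rather than compute anything new. The first clause, that $\Psi^{(3/2)}_\mu\equiv0$ is not merely kinematic, should follow from Theorem~\ref{thm:nogo} applied to the independent field. I will first recall Eq.~\eqref{eq:spin32_vanishes}: in the composite theory the gamma-traceless part vanishes identically. Then I will observe that Theorem~\ref{thm:nogo} is the corresponding dynamical statement for an unconstrained sixteen-component $\Psi_\mu$. There every pole of the quadratic operator $\mathcal K(k)$ of Eq.~\eqref{eq:Ktot} lies at $k^2=0$, and the gamma-traceless directions enter only through $\ker T$ (pure gauge) or through the massless $h_{\mu\nu}$. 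The Dirac--Bergmann count stated after the theorem then removes any residual constrained data. So, on every background to which the factorization applies, freeing $\Psi_\mu$ adds no spin-$\tfrac32$ degree of freedom. The kinematic vanishing is thereby promoted to a dynamical one.

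The second step is to identify what does propagate. From the field side, the on-shell content of $\mathcal Q$ is the two graviton polarizations and a massless scalar. Neither is a fermion, and both are part of the gravitational sector: the scalar is the amplitude $\Phi=\bar\psi\psi$, whose homogeneous part is the condensate $\varphi(t)$. The only fermionic excitation left is therefore the spin-$\tfrac12$ sector carried by the gamma-trace $\zg=4\psi$. In Paper~I this is the composite Dirac fermion with torsional mass $\Meff^2=\tfrac16(\dot\varphi/\varphi)^2$. I would cross-check this against the consistency check of Sec.~\ref{sec:mass}: $\gamma_rM^{rs}\gamma_s$ reproduces $\tfrac23K^aK_a$, so the mass seen by the independent field restricted to the trace sector is exactly the Paper~I Dirac mass. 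That makes the surviving matter mode and the composite fermion the same object.

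Finally, the dark-matter statement follows by exhaustion. A candidate must be a propagating, massive matter excitation that is not the metric itself. The list is graviton, scalar condensate mode, and composite spin-$\tfrac12$ fermion, and only the fermion qualifies. Hence it is the unique candidate.

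The main obstacle is the second step. One must make sure the spin-$\tfrac12$ fermion is not itself eliminated by the same constraint analysis that kills $\Psi_0$ and the gamma-trace in the proof of Theorem~\ref{thm:nogo}. On that proof's wording, the gamma-trace appears to be non-propagating as well. I would resolve this by separating backgrounds. On the Minkowski background with $\Phi_0=1$ and $K=0$, the trace enters only via $\delta\Phi$. On the cosmological background of Paper~I, the fermion is the first-order excitation identified there, propagating with its own kinetic term from the matter coupling. The uniqueness claim is then the statement that nothing beyond it propagates, not a rederivation of its existence. I would state that existence explicitly as an input imported from~\cite{TungCQG2026}.
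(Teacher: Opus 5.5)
Your argument is essentially the paper's own. The paper obtains the corollary by combining the kinematic identity~\eqref{eq:spin32_vanishes} with Theorem~\ref{thm:nogo} and the Dirac--Bergmann count that follows it, and it takes the existence of the propagating spin-$\tfrac12$ fermion from Paper~I~\cite{TungCQG2026}, exactly as in Sec.~\ref{sec:DM}. The tension you flag is real. By Sec.~\ref{sec:curved} the kinetic content is $\mathcal Q(h,\delta\Phi)$ on \emph{any} background, so the gamma-trace contributes only the bosonic scalar $\delta\Phi$, and separating backgrounds cannot give the fermion a kinetic term from the QSL action itself. Stating its existence as an explicit input imported from Paper~I, which the paper does only implicitly, is therefore the right resolution.
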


\paragraph{Background independence.}
Theorem~\ref{thm:factorization} used only that the background is a vacuum
solution with $R_0=0$; the factorization is the chain rule through
$g[\Psi],\Phi[\Psi]$ and holds at any such point. On a curved Ricci-flat
background $\delta^2(R\sqrt{-g})$ acquires zeroth-derivative terms
$\sim\mathrm{Riem}_0\!\cdot\!h\!\cdot\!h$, but these do not change the
principal symbol, which remains $k^2\mathbf 1$; step~(i) is unaffected, and
propagation is confined to the (now curved) metric characteristic cone with no
massive mode. The no-go is a property of the QSL action, not of the chosen
background.

\paragraph{Numerical confirmation.}
Table~\ref{tab:nulls} reports $\dim\ker\mathcal K(k)$ for the full
operator~\eqref{eq:Ktot} across three backgrounds $\psi_0$ and three
propagation directions. In every case the null dimension is constant through
spacelike and timelike $k^2$ and jumps only at $k^2=0$, confirming
Theorem~\ref{thm:nogo}; the generic value counts gauge plus non-dynamical
directions and depends only on the orientation of $\psi_0$ relative to
$\hat k$.

\begin{table}[h]\centering
\begin{tabular}{llc}
\hline\hline
background & direction & $\dim\ker\mathcal K$:\ \ $k^2{>}0\ \mid\ k^2{=}0\ \mid\ k^2{<}0$\\
\hline
rest $z_+$ & $k\parallel z$ & $10\ \mid\ 12\ \mid\ 10$\\
rest $z_+$ & $k\parallel x$ & $\ \,9\ \mid\ 12\ \mid\ \ \,9$\\
rest $z_-$ & $k\parallel z$ & $10\ \mid\ 12\ \mid\ 10$\\
rest mix   & $k\parallel z$ & $\ \,9\ \mid\ 12\ \mid\ \ \,9$\\
rest mix   & $k\parallel x$ & $10\ \mid\ 12\ \mid\ 10$\\
(all)      & $k$ diagonal   & $\ \,9\ \mid\ -\ \mid\ \ \,9$\\
\hline\hline
\end{tabular}
\caption{Null dimension of $\mathcal K(k)$ (kinetic $+\,\delta\Phi$ vertex):
the jump occurs only at $k^2=0$; no timelike pole.}
\label{tab:nulls}
\end{table}

\begin{remark}[The escape is to leave the QSL]
\label{rem:escape}
The proof turns on the action being $\propto-\Phi R$ with metric and scalar
both built from $\Psi$. Promoting $\Phi$ to an independent field leaves
$\delta^2 S=\mathcal Q(h,\delta\Phi)$ unchanged, with $\delta\Phi$ now an
independent \emph{massless} scalar, so the conclusion stands. A genuine
spin-$\tfrac32$ kinetic term requires adding to~\eqref{eq:Saction} an operator
that is not a functional of $(g,\Phi)$ alone---i.e.\ a theory outside the QSL
family.
\end{remark}

\section{Constraint-sector corollary}
\label{sec:constraints}
It is instructive to see how the no-go intersects the constraint analysis that
a putative massive spin-$\tfrac32$ field would require. On the cosmological
slice the little group $SO(3)$ block-diagonalizes the system: the
transverse, gamma-traceless spatial components carry helicity $\pm\tfrac32$,
while the time component $\Psi_0$, the longitudinal part and the gamma-trace
$\zg$ assemble into the helicity-$\pm\tfrac12$ sector. The trace torsion
$K_a=K\delta^0_a$ is an $SO(3)$ scalar and cannot mix sectors.

The time component $\Psi_0$ enters the action without a time derivative and is
a Lagrange multiplier. By Pillar~I the torsional mass acts \emph{only} through
$\Psi_0$ ($M^{00}=M^{ij}=0$, $M^{0i}\neq0$), so the secondary constraint that
$\Psi_0$ generates is controlled directly by that mass: it degenerates exactly
as $\Meff\to0$. This reproduces, now \emph{derived from the QSL itself} rather
than imported from the first-order Rarita--Schwinger bracket, the structure by
which a massive spin-$\tfrac32$ constraint is preserved iff $\Meff\neq0$. But
in the present theory the conclusion is sharper: the helicity-$\pm\tfrac32$
sector, by Theorem~\ref{thm:nogo}, has no propagating kinetic term to begin
with, so there is no spin-$\tfrac32$ mode whose causality or constraint
preservation needs checking. The Velo--Zwanziger question does not arise
because its subject does not exist.

\section{Uniqueness of the composite candidate and dark matter}
\label{sec:DM}
Combining the kinematic and dynamical results: the composite QSL field has no
spin-$\tfrac32$ part [Eq.~\eqref{eq:spin32_vanishes}], and freeing $\Psi_\mu$
to an independent field does not produce a propagating spin-$\tfrac32$ mode
either [Corollary~\ref{cor:unique}]. The propagating matter content of the QSL
is therefore exhausted by the composite spin-$\tfrac12$ Dirac fermion, whose
geometric mass and one-parameter relic abundance,
\begin{equation}
  \Meff=\frac{c_\varphi}{\sqrt6}\,H_*,\qquad
  \Omega h^2\propto H_*^{5/2},
\end{equation}
were established in Paper~I~\cite{TungCQG2026}. That work's restriction to the
composite field is thereby justified \emph{a posteriori}: it is not a
simplifying choice but the only option the QSL offers.

The contrast with generic higher-spin dark matter is instructive. A
Rarita--Schwinger or gravitino dark-matter
field~\cite{KolbLongMcDonough2021,KalloshKofmanLindeVanProeyen2000} propagates
because it is endowed with its \emph{own} first-order kinetic term, independent
of the spacetime metric; its longitudinal mode and the attendant
catastrophic-production and Velo--Zwanziger pathologies are then genuine
hazards to be navigated. The QSL contains no such term: by Pillar~II the only
admissible kinetic structure is the gravitational one inherited through
$g=\Psi\otimes_S\Psi$, which by Theorem~\ref{thm:nogo} supports no massive
higher-spin mode. The QSL thus evades the spin-$\tfrac32$ pathologies not by
taming them but by precluding the field that would suffer them---and in doing
so isolates the spin-$\tfrac12$ fermion as its sole, and structurally
protected, dark-matter candidate. The contrast with the wider
Einstein--Cartan dark-matter programme is also instructive: there torsion
\emph{produces} dark matter through four-fermion interactions, over a broad
mass range~\cite{ECportal}, whereas here the same geometry instead
\emph{selects} the unique propagating candidate and forbids its higher-spin
generalization.

\section{Curved backgrounds and the limits of the no-go}
\label{sec:curved}
Theorems~\ref{thm:factorization}--\ref{thm:nogo} were proved about a vacuum
background ($R_0=0$, $G^{\mu\nu}_0=0$). A genuine cosmological background is
\emph{not} vacuum, and it is precisely there that one must ask whether a
non-perturbative effect can activate the spin-$\tfrac32$ sector. We show that
the curved-background corrections, while they do reach the gamma-traceless
components, supply only an \emph{algebraic} (mass-like) coupling---never a
kinetic term---so the modes remain non-dynamical; but their mass passes through
zero at a definite point of the cosmological history, where the linearized
elimination becomes singular and the controlled description breaks down.

\paragraph{The non-vacuum second variation.}
Retaining $R_0,G^{\mu\nu}_0\neq0$ in the chain rule~\eqref{eq:chain}, the
intrinsic pieces no longer drop, and
\begin{equation}
  \delta^2 S=\mathcal Q\bigl(h,\delta\Phi\bigr)
  -\!\int\!\Bigl[\,\Phi_0\,G^{\mu\nu}_0\,\delta^2 g_{\mu\nu}
  +R_0\,\delta^2\Phi\,\Bigr]\sqrt{-g_0},
  \label{eq:curvedterms}
\end{equation}
where, because $g=\Psi\otimes_S\Psi$ and $\Phi=\bar\psi\psi$ are
\emph{exactly} bilinear in $\Psi$, the second variations are the pure squares
\begin{equation}
  \delta^2 g_{\mu\nu}=\overline{\delta\Psi}_{(\mu}\delta\Psi_{\nu)}+\mathrm{c.c.},
  \qquad
  \delta^2\Phi=\tfrac18\,\bar{\zg}\,\zg,\quad \zg=\gamma^\mu\delta\Psi_\mu .
  \label{eq:d2}
\end{equation}
Both new terms are \emph{algebraic} in $\delta\Psi$---no derivatives---so they
are masses, not kinetic operators. The kinetic content of $\delta^2 S$ is still
$\mathcal Q(h,\delta\Phi)$, massless as before; Theorem~\ref{thm:nogo}'s
conclusion that no spin-$\tfrac32$ \emph{kinetic} term exists is therefore
untouched on any background. What changes is that the first
term in~\eqref{eq:curvedterms} reaches the gamma-traceless sector that the
torsional mass~\eqref{eq:framealigned_mass} did not.

\paragraph{The induced spin-$\tfrac32$ mass is the background pressure.}
Model the background Einstein tensor by the perfect-fluid form
$G^{\mu\nu}_0=\mathrm{diag}(\rho,p,p,p)$ ($8\pi G=1$). Evaluated on a
transverse, gamma-traceless spin-$\tfrac32$ polarization
($\delta\Psi_0=0$, $k^i\delta\Psi_i=0$, $\gamma^i\delta\Psi_i=0$), the
quadratic form~\eqref{eq:d2} gives
\begin{equation}
  G^{\mu\nu}_0\,\overline{\delta\Psi}_{(\mu}\delta\Psi_{\nu)}
  \;=\;p\;\overline{\delta\Psi}_i\,\delta\Psi^{i},
  \label{eq:mass_pressure}
\end{equation}
the energy density $\rho$ cancelling identically: the curvature-induced mass of
the transverse helicity-$\pm\tfrac32$ modes is set by the background
\emph{pressure} alone,
\begin{equation}
  m^2_{3/2}(t)\ \propto\ p=w\rho .
  \label{eq:m32}
\end{equation}
(The trace term $R_0\,\delta^2\Phi\propto(1-3w)\rho\,\bar\zg\zg$ feeds only the
spin-$\tfrac12$ gamma-trace, and vanishes in the radiation era $w=\tfrac13$.)

\paragraph{Non-perturbative activation.}
The transverse spin-$\tfrac32$ component remains an auxiliary field: with no
kinetic term and the mass~\eqref{eq:m32}, its equation of motion is algebraic,
$m^2_{3/2}\,\delta\Psi^{(3/2)}=J$, where $J$ collects the cubic sources
(graviton--graviton, condensate-gradient, and stochastic GW couplings of the
parent mechanism~\cite{MaleknejadKopp2026}). Solving and back-substituting is
the elimination that enforces non-dynamics---\emph{provided} $m^2_{3/2}\neq0$.
The response $\delta\Psi^{(3/2)}=J/m^2_{3/2}$ diverges where the mass passes
through zero, i.e.\ at
\begin{equation}
  \boxed{\,p\to0\quad(w\to0,\ \text{matter domination}).\,}
  \label{eq:activation}
\end{equation}
There the linearized elimination is singular, the constraint that freezes the
mode degenerates, and a sourced spin-$\tfrac32$ response can grow large---the
QSL counterpart of the catastrophic-production point at which the gravitino
sound speed vanishes~\cite{KolbLongMcDonough2021,KalloshKofmanLindeVanProeyen2000}.

\paragraph{Physical conditions, and why activation is not a loophole.}
Four conditions must coincide for the would-be activation:
(a)~a \emph{non-vacuum, curved} background, $G^{\mu\nu}_0\neq0$---generic in
cosmology, but with the no-go \emph{exact} in the radiation era, where the
trace correction vanishes ($R_0\propto1-3w=0$) while the transverse modes stay
massive ($p=\rho/3\neq0$);
(b)~an auxiliary mass eigenvalue \emph{crossing zero}---the transverse channel
at $w=0$, with longitudinal/mixed channels at $w$-values fixed by combinations
of $\rho,p$ and the torsion mass $\Meff$;
(c)~\emph{non-adiabaticity}, $|\dot m_{3/2}/m_{3/2}^2|\gtrsim1$, so the field
cannot track its vanishing constraint adiabatically (an adiabatic crossing
leaves it slaved and unexcited);
(d)~a regime where the expansion about the composite background is itself
controlled---which fails once $\Meff\sim H$ and the condensate amplitude is
large, exactly near~\eqref{eq:activation}.
The decisive point is that even at the crossing the mode has \emph{no kinetic
term}: the divergence of $\delta\Psi^{(3/2)}=J/m^2_{3/2}$ signals a breakdown of
the linearized (and indeed of the controlled effective) description---strong
coupling---rather than the appearance of a healthy propagating particle.

\paragraph{No spin-$\tfrac32$ abundance: the activation is a redundancy.}
The would-be excitation produces no relic, for a structural reason. The
transverse $\delta\Psi^{(3/2)}$ is blind to $\Phi$
[Eq.~\eqref{eq:dPhimap}], so it enters the action \emph{only} through the
metric perturbation it induces,
$h^{(\chi)}_{\mu\nu}=\bar\Psi^0_{(\mu}\delta\Psi^{(3/2)}_{\nu)}+\mathrm{c.c.}$
A direct evaluation of this map on the helicity-$\pm\tfrac32$ polarization (for
$k\parallel\hat z$) gives a metric perturbation with a nonzero
transverse-traceless part---$h_{xx}-h_{yy}$ and $h_{xy}\neq0$---together with a
remainder confined to the non-propagating components ($h_{0i}$, the
longitudinal $h_{iz}$, with $h_{00}=0$ and spatial-traceless). Since the only
propagating part of the metric is the graviton, the spin-$\tfrac32$ fluctuation
is a \emph{redundant relabeling of the graviton plus gauge/constraint data}: it
carries no independent quanta. There is accordingly nothing to populate---any
``production'' is graviton production, already accounted for in the parent
mechanism~\cite{MaleknejadKopp2026}---and the mass-crossing~\eqref{eq:activation}
is the degeneration of this redundant $\delta\Psi^{(3/2)}\!\to\!(h_{\rm
TT},\,\text{gauge})$ parametrization, not a physical particle becoming light.
The controlled spin-$\tfrac32$ relic is therefore zero,
$\Omega_{3/2}h^2=0$.

\paragraph{Quantitative activation window and abundance margin.}
The qualitative crossing~\eqref{eq:activation} can be made explicit. With
$8\pi G=1$ one has $H^2=\rho/3$ and $p=w\rho=3wH^2$, so the
pressure-induced mass~\eqref{eq:m32} is $m^2_{3/2}=\xi\,p$ with
$\xi=\Phi_0=\langle\bar\psi\psi\rangle=O(1)$ and
\begin{equation}
  \frac{m^2_{3/2}}{H^2}=3\xi w
  =\frac{\xi}{1+a/a_{\rm eq}},\qquad
  w(a)=\frac{1}{3\,(1+a/a_{\rm eq})},
  \label{eq:m32_over_H}
\end{equation}
for a radiation-plus-matter background, whence $m_{3/2}\propto a^{-2}$
($m^2_{3/2}\propto p\propto\rho_{\rm rad}\propto a^{-4}$) and, exactly,
\begin{equation}
  \mathcal A(a)\equiv\Bigl|\frac{\dot m_{3/2}}{m^2_{3/2}}\Bigr|
  =\frac{2H}{m_{3/2}}
  =2\sqrt{\frac{1+a/a_{\rm eq}}{\xi}}\;\ge\;\frac{2}{\sqrt\xi}.
  \label{eq:adiabaticity}
\end{equation}
For $\xi=O(1)$ the non-adiabaticity parameter of condition (c) satisfies
$\mathcal A\gtrsim2>1$ \emph{throughout} the post-radiation history
(Fig.~\ref{fig:activation}): the would-be activation is not a narrow
resonance but the whole epoch, and is correspondingly harmless. Lacking a
kinetic term, $\mathcal A\gtrsim1$ excites no quanta; and even the most
generous reinterpretation---the activated transverse mode as a genuine
massless species---is bounded by the single largest gravitational-production
episode, the end of inflation at $H_*\lesssim3\times10^{-5}M_{\rm Pl}$
($r\lesssim0.03$):
\begin{equation}
  \frac{\rho_{3/2}}{\rho_{\rm tot}}\;\lesssim\;0.1\,
  \Bigl(\frac{H_*}{M_{\rm Pl}}\Bigr)^2\;\lesssim\;3\times10^{-11},
  \qquad\Delta N_{\rm eff}\lesssim10^{-10},
  \label{eq:dNeff}
\end{equation}
unobservably small, redshifting as dark radiation (the curvature gap closes
in the radiation era, $R_0\propto1-3w=0$) rather than dark matter, and by the
redundancy argument double-counted graviton data in any case. The structural
result $\Omega_{3/2}h^2=0$ thus carries an explicit quantitative margin.

\begin{figure}[t]
  \centering
  \includegraphics[width=0.96\linewidth]{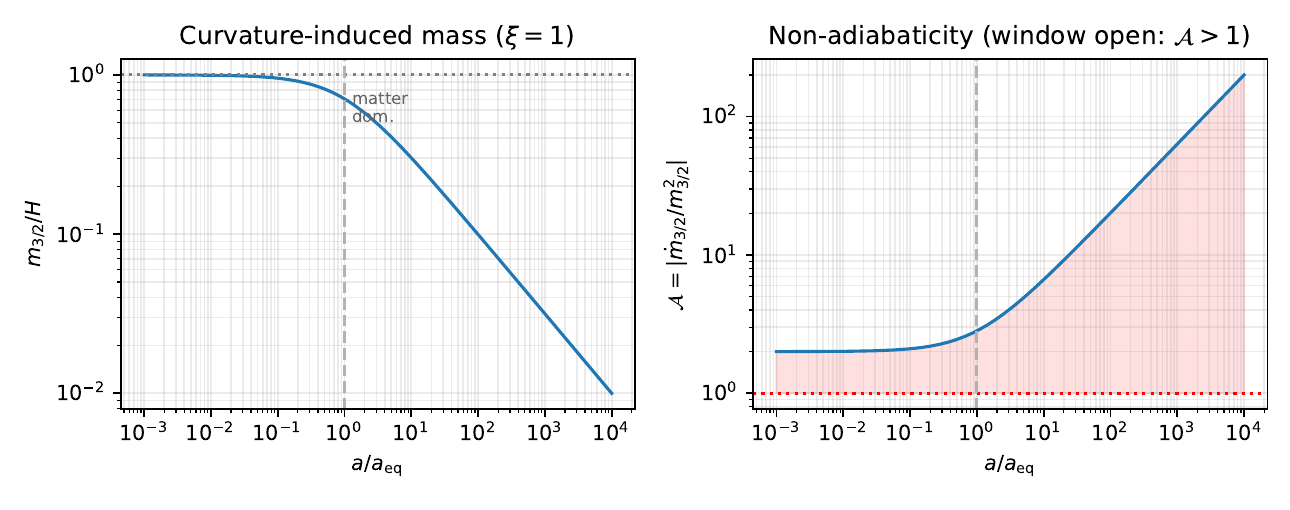}
  \caption{Curvature-induced transverse mass $m_{3/2}/H$ (left) and the
  non-adiabaticity parameter $\mathcal A=|\dot m_{3/2}/m_{3/2}^2|$ (right)
  versus scale factor, for $\xi=1$, on a radiation-plus-matter background
  [Eqs.~\eqref{eq:m32_over_H}--\eqref{eq:adiabaticity}]. The mass falls below
  $H$ as matter domination is approached, and $\mathcal A\ge2$ everywhere:
  the activation window (shaded) is open throughout, yet---there being no
  kinetic term---produces no relic.}
  \label{fig:activation}
\end{figure}

The no-go is thus not evaded but \emph{sharpened}: a spin-$\tfrac32$
excitation can be forced only where the QSL description ceases to be weakly
coupled, and even there it is a redundant description of gravitational data, not
a new dark-matter field. A complete account of the strong-coupling regime
itself---should the background drive the system into it---would require the
cubic action and the parent mechanism's sources, and lies beyond the quadratic
analysis on which the no-go rests; but no controlled, causally propagating
spin-$\tfrac32$ dark matter exists at any point of the cosmological history.

\section{Discussion}
\label{sec:discussion}
The result is a no-go theorem internal to the Quadratic Spinor Lagrangian:
because the action $-\bar\psi\psi\,R$ depends on the field only through the
metric $g=\Psi\otimes_S\Psi$ and the scalar $\bar\psi\psi$, its quadratic
fluctuations factor through a massless metric and a massless scalar, leaving no
room for a propagating massive spin-$\tfrac32$ mode. The conclusion is robust:
it is background-independent (resting on the principal symbol, not on a chosen
frame), independent of the kinetic/mass normalization, and unchanged if the
scalar $\bar\psi\psi$ is itself promoted to an independent field.

The natural comparison is with supergravity, where the gravitino \emph{does}
propagate causally as an Einstein--Cartan field with algebraically sourced
torsion~\cite{DeserZumino1976}. The distinction is precisely the one isolated
here: the gravitino carries an independent Rarita--Schwinger kinetic term,
whereas the QSL's matter sector has none---all its propagation is the
graviton's, channelled through the spinorial frame. The QSL is in this sense
maximally economical: a single Dirac spinor builds the geometry, and the only
excitations are the graviton and that same spinor's spin-$\tfrac12$ quantum.

\subsection{A super-Higgs reading: the spin-$\tfrac12$ as a Goldstino}
\label{sec:superhiggs}
The no-go has a complementary, purely kinematic interpretation through the
super-$SL(2,\mathbb C)$ structure of the QSL, in which the spin connection and the
spinor one-form join in a graded connection with
$\{Q_A,Q_B\}=2M_{AB}$~\cite{Tung2000,TungCQG2026}. The defining bilinear
$g_{\mu\nu}=\bar\Psi_{(\mu}\Psi_{\nu)}$ is an order parameter: a nonzero
$\langle\bar\Psi\Psi\rangle$ transforms as a Lorentz tensor and so \emph{preserves}
the bosonic $SL(2,\mathbb C)$ (the local Lorentz symmetry of the emergent
geometry), while it does \emph{not} annihilate the fermionic generators $Q_A$.
The metric condensate therefore breaks super-$SL(2,\mathbb C)\to SL(2,\mathbb C)$
spontaneously, at the scale $\sqrt F\sim M_{\rm Pl}$ that normalises the
Einstein--Hilbert term.

Under this breaking the Lorentz decomposition of $\Psi_\mu$ (its $\gamma$-trace
spin-$\tfrac12$ part and $\gamma$-traceless spin-$\tfrac32$ part) becomes the
broken-SUSY multiplet. The $\gamma$-trace mode $\gamma^\mu\Psi_\mu$ shifts
inhomogeneously under the broken generators---it is the \emph{Goldstino}---so
Goldstone's theorem forces it to exist and propagate the moment
$\langle\bar\Psi\Psi\rangle\neq0$. The $\gamma$-traceless spin-$\tfrac32$ mode is
the would-be gravitino; lacking any independent Rarita--Schwinger kinetic term (the
very fact established above), it cannot eat the Goldstino in a propagating
super-Higgs, and is instead gapped at $m_{3/2}\sim F/M_{\rm Pl}\sim M_{\rm Pl}$.
The curvature-induced transverse mass $m_{3/2}^2\propto p_{\rm bg}$ of
Sec.~\ref{sec:curved} is then only the residual coupling of those
Planck-gapped components. In this language the present theorem and Goldstone's
theorem are two faces of one statement: the spin-$\tfrac12$ is \emph{forced to
propagate} (Goldstino), while the spin-$\tfrac32$ is \emph{forced not to}
(no kinetic term $\Rightarrow$ Planck-scale gap). Consistently, the Goldstino is
not exactly massless: the slow time-dependence of the cosmological condensate,
$\chi=\chi(t)$, is an explicit breaking that lifts it to a pseudo-Goldstone with
$m=\alpha^{1/2}|\dot\chi/\chi|=|\dot\chi/\chi|/\sqrt6$, recovering precisely the
geometric mass $M_{\rm eff}$ of the companion paper~\cite{TungCQG2026} (the factor
$\sqrt6$ being the same $\alpha^{-1}=6$ of the trace-torsion contorsion). We
present this as an interpretation rather than a second proof, though an explicit
construction supports it: the gravitino supercurrent
$S^\mu=\varepsilon^{\mu\nu\rho\sigma}\gamma_5\gamma_\nu D_\rho\Psi_\sigma$ is
conserved, the longitudinal $\Psi_\sigma\propto q_\sigma$ decouples, and it is the
$\gamma$-trace (spin-$\tfrac12$) mode---not the transverse spin-$\tfrac32$---that
sources $S^\mu$ and its $\gamma$-trace, identifying that mode as the Goldstino with
breaking scale $\sqrt F=\langle\bar\Psi\Psi\rangle^{1/2}=M_{\rm Pl}$ (the
$R$-coefficient of the spinor-curvature identity). Only the $O(1)$ Goldstino
normalisation, fixed by the canonical field map at the condensate, is left to a
complete treatment. Consistently, the divergence $\partial_\mu S^\mu$ on the
trace-torsion background reduces to a time-aligned Dirac mass
$\propto K\gamma^0$ (with $K\propto\dot\chi/\chi$), so the supercurrent's
non-conservation \emph{is} the Goldstino's geometric mass $M_{\rm eff}$ of
Ref.~\cite{TungCQG2026}: a single relation
$\partial_\mu S^\mu\propto F(i\gamma^\nu\partial_\nu-M_{\rm eff})G$ carries both the
breaking scale $F$ and the contorsion factor in $M_{\rm eff}$.

This places the QSL dark-matter candidate on the oldest branch of the
supersymmetric dark-matter tree. That the helicity-$\tfrac12$ components of a
massive gravitino behave, in the high-energy limit, as the Goldstino of global
supersymmetry---with effective derivative coupling $\propto\sqrt{2/3}\,
k^\mu/m_{3/2}$ and $m_{3/2}=\kappa d/\sqrt6=\kappa F/\sqrt3$---is the
gravitino--Goldstino equivalence established by
Fayet~\cite{Fayet1977grav,Fayet1979wk,Fayet1979scat}. We note a striking
structural resonance: the $\sqrt6$ that there relates the gravitino mass to the
supersymmetry-breaking scale is the same $\sqrt6$ that here relates $M_{\rm eff}$
to $|\dot\chi/\chi|$ and normalises the $\gamma$-trace Goldstino, both arising
from the $\gamma$-trace projection of the vector-spinor. An early-decoupling
light gravitino was first identified as a dark-matter candidate by
Fayet~\cite{Fayet1981moriond} and developed by Pagels and
Primack~\cite{PagelsPrimack1982}, stable as the lightest $R$-odd state; for such
a light gravitino it is the helicity-$\tfrac12$ (Goldstino) component that
dominates, so ``Goldstino'' and ``light-gravitino'' dark matter are one object
viewed at different energies. The QSL spin-$\tfrac12$
belongs to this family---it is the Goldstino of a broken local supersymmetry---but
differs from every prior realisation in five respects. (i) The broken symmetry is
\emph{gravitational}: the super-$SL(2,\mathbb C)$ that builds the metric, not a
particle-physics superpartner sector, so there is no sparticle spectrum and no
$R$-parity, and the breaking scale is $M_{\rm Pl}$ rather than a soft mass.
(ii) The Goldstino is \emph{composite}---the spin-$\tfrac12$ mode of the
geometry-building spinor---not an elementary field. (iii) Its mass is the
\emph{geometric}, Hubble-locked $M_{\rm eff}$, a pseudo-Goldstone lift, not a soft
or super-Higgs mass (the would-be $m_{3/2}\sim M_{\rm Pl}$ here merely gaps the
spin-$\tfrac32$ away). (iv) \emph{Only} the Goldstino survives: the transverse
spin-$\tfrac32$ is absent by the theorem above, whereas in standard light-gravitino
dark matter the full multiplet is present and merely Goldstino-dominated.
(v) Production is gravitational~\cite{KolbLongMcDonough2021}, not thermal
regeneration, so the classic gravitino overproduction/BBN tension does not arise.
There is a structural irony here: the helicity-$\tfrac12$/Goldstino component is
precisely the mode whose vanishing sound speed makes massive gravitinos dangerous
to overproduce (the longitudinal enhancement of massive gravitinos discussed
above); the QSL turns that same mode into the dark matter,
while the troublesome transverse modes do not propagate at all.

Several questions remain open. The analysis is linearized about a vacuum
background; whether nonperturbative condensate dynamics could activate
otherwise non-dynamical components is not addressed. The curved-background
extension was argued at the level of the principal symbol and deserves an
explicit Dirac--Bergmann treatment. And the broader question---whether
\emph{any} enlargement of the QSL action could host a healthy spin-$\tfrac32$
field without sacrificing the finite, fully covariant Hamiltonian that
distinguishes the QSL~\cite{NesterTung1995}---lies outside the present scope,
but is sharpened by Remark~\ref{rem:escape}: such an enlargement must add an
operator not built from $(g,\Phi)$, and would thereby define a different
theory.

\section*{Acknowledgments}
\addcontentsline{toc}{section}{Acknowledgments}
I am grateful to Pierre Fayet for drawing my attention to his foundational work
on the gravitino--Goldstino equivalence and on the light gravitino as a
dark-matter candidate, which underlies the discussion of
Section~\ref{sec:superhiggs}, and for his generous and careful reading of this
work.

\appendix
\section{Clifford reduction of the torsional mass}
\label{app:clifford}

We evaluate the derivative-free term
$\mathcal L_{K^2}=2\,\Kc\Psi\,\gamma_5\,\Kc\Psi$ of the torsion-expanded QSL,
first for the composite field (recovering $\alpha=\tfrac16$) and then for the
independent vector-spinor (establishing the frame-aligned
kernel~\eqref{eq:framealigned_mass}). All traces below were verified
symbolically in an explicit Dirac basis by computer
algebra.\footnote{The supporting computer-algebra scripts (\texttt{SymPy})
are available from the author.}

\paragraph{Contorsion and the mass $2$-form.}
For the vectorial trace torsion $T^a=\tfrac13 K\wedge\vartheta^a$ with
$K_a=K\delta^0_a$, the contorsion $1$-form is
$K^{ab}=\tfrac13(K^a\vartheta^b-K^b\vartheta^a)$, and the
Clifford-algebra-valued contorsion is
\begin{equation}
  \Kc=\tfrac14 K_{ab}\gamma^{ab}=\tfrac16 K_a\,\vartheta_b\,\gamma^{ab}
  \;=\;\Kc_e\,\vartheta^e,\qquad
  \Kc_e=\tfrac16\,K_a\,\gamma^{a}{}_{e},\quad
  \gamma^{a}{}_{e}\equiv\gamma^{ab}\eta_{be}.
  \label{eq:Kcomp}
\end{equation}
Writing the spinor $1$-form as $\Psi=\Psi_c\,\vartheta^c$, the product
$\Kc\Psi=\Kc\wedge\Psi$ is the spinor-valued $2$-form $\Xi=\tfrac12\,
\Xi_{ec}\,\vartheta^e\wedge\vartheta^c$ with components
\begin{equation}
  \Xi_{ec}=\Kc_e\,\Psi_c-\Kc_c\,\Psi_e .
  \label{eq:Xi}
\end{equation}
The $\gamma_5$-pairing of two such $2$-forms collapses to the scalar density
\begin{equation}
  \mathcal L_{K^2}
  =\tfrac12\,\epsilon^{pqmc}\,\overline{\Xi_{pq}}\,\gamma_5\,\Xi_{mc},
  \label{eq:LK2}
\end{equation}
the $\epsilon$ arising from expressing the wedge of two $2$-forms as a
volume $4$-form; the Dirac conjugation $\overline{(\,\cdot\,)}$ and the chiral
element $\gamma_5$ are the QSL reality/duality data
[cf.~Eq.~\eqref{eq:identity}].

\paragraph{Composite field.}
Setting $\Psi_c=\gamma_c\psi$ gives $\Xi_{ec}=M_{ec}\psi$ with
\begin{equation}
  M_{ec}=\tfrac16\,K_a\bigl(\gamma^{a}{}_{e}\gamma_c-\gamma^{a}{}_{c}\gamma_e\bigr),
  \label{eq:Mcomp}
\end{equation}
so that $\mathcal L_{K^2}=\bar\psi\,\mathcal S\,\psi$ with
$\mathcal S=\tfrac12\,\epsilon^{pqmc}\,\overline{M_{pq}}\,\gamma_5 M_{mc}$.

To make $M_{ec}$ explicit we use the Clifford split
\begin{equation}
  \gamma^{ab}\gamma^c=\gamma^{abc}+\eta^{bc}\gamma^a-\eta^{ac}\gamma^b,
  \label{eq:split_id}
\end{equation}
which gives the mixed symbol $\gamma^a{}_e=\gamma^{ab}\eta_{be}$ entering the
contorsion~\eqref{eq:Kcomp}. For the cosmological torsion $K_a=K\delta^0_a$
only $a=0$ contributes, and since $\gamma^{00}=0$,
\begin{equation}
  \gamma^0{}_0=0,\qquad \gamma^0{}_i=\gamma^{0i}\ \ (i=1,2,3),
  \label{eq:gmix0}
\end{equation}
so the contorsion is purely the boost-type generator
$\Kc_e=\tfrac16 K\,\gamma^0{}_e$, nonzero only for spatial $e$. The mass
coefficient~\eqref{eq:Mcomp} then reads
\begin{equation}
  M_{ec}=\tfrac16 K\bigl(\gamma^0{}_e\,\gamma_c-\gamma^0{}_c\,\gamma_e\bigr),
\end{equation}
a sum of products $\gamma^{0i}\gamma_c$, each reducible by~\eqref{eq:split_id}
to a totally antisymmetric triple plus a single $\gamma$. Inserting this into
$\mathcal S$ and using $\gamma_5\gamma^{abc}=-i\,\epsilon^{abcd}\gamma_d$ to
pair the triple against the $\epsilon$ tensor, the products collapse onto the
Clifford basis $\{\mathbf 1,\gamma_5,\gamma_a,\gamma_5\gamma_a,\gamma_{ab}\}$,
whose coefficients are isolated by the traces
$\tfrac14\mathrm{tr}\,\mathcal S$, $\tfrac14\mathrm{tr}(\gamma_5\mathcal S)$,
$\tfrac14\mathrm{tr}(\gamma_a\mathcal S)$, and so on. With $K_a=K\delta^0_a$,
only the scalar channel survives:
\begin{equation}
  \mathrm{tr}\,\mathcal S=\tfrac83\,K^aK_a,\qquad
  \mathrm{tr}(\gamma_5\mathcal S)=\mathrm{tr}(\gamma_a\mathcal S)
  =\mathrm{tr}(\gamma_5\gamma_a\mathcal S)=\mathrm{tr}(\gamma_{ab}\mathcal S)=0,
  \label{eq:Straces}
\end{equation}
hence
\begin{equation}
  \mathcal S=\tfrac23\,K^aK_a\,\mathbf 1 .
  \label{eq:Sresult}
\end{equation}
There is no pseudoscalar, vector, axial or tensor part: the trace-torsion mass
is a pure Dirac scalar, with no $\gamma_5$ mixing and no cross terms. Writing
the coefficient against the vector-spinor norm
$\bar\Psi_\mu\Psi^\mu=\bar\psi\gamma_\mu\gamma^\mu\psi=4\bar\psi\psi$ defines
$\alpha$ via $\mathcal S=4\alpha\,K^aK_a\,\mathbf 1$, so
\begin{equation}
  \boxed{\ \alpha=\tfrac16\ },\qquad
  \Meff^2=-\alpha\,K^aK_a=\tfrac16 K^2
  =\tfrac16\Bigl(\tfrac{\dot\varphi}{\varphi}\Bigr)^2,
  \label{eq:alpha}
\end{equation}
using $K^aK_a=-K^2<0$ for the timelike trace vector. (The overall factor $i$
generated by the $\gamma_5$--$\epsilon$ pairing,
$\gamma_5\propto i\,\epsilon_{abcd}\gamma^{abcd}$, multiplies kinetic and mass
densities alike and cancels in the ratio $\Meff^2$; it is the universal QSL
reality convention. The intermediate $\mathcal S=\tfrac23 K^aK_a\mathbf1$,
not $\tfrac16 K^aK_a\mathbf1$, is the correctly normalized scalar coefficient,
with $\Meff^2$ read off through the $\bar\Psi_\mu\Psi^\mu$ bilinear.)

\paragraph{Independent field.}
Retaining all components, $\Psi_c$ is unconstrained and~\eqref{eq:Xi},
\eqref{eq:LK2} give the bilinear
$\mathcal L_{K^2}=\bar\Psi_r\,M^{rs}\,\Psi_s$ with
\begin{equation}
  M^{rs}=\tfrac12\,\epsilon^{pqmc}\Bigl[
   \delta^r_q\,\overline{\Kc_p}\,\gamma_5\,\Kc_m\,\delta^s_c
  -\delta^r_q\,\overline{\Kc_p}\,\gamma_5\,\Kc_c\,\delta^s_m
  -(p\!\leftrightarrow\!q)\Bigr],
  \label{eq:Mrs_raw}
\end{equation}
the four terms coming from expanding both $\overline{\Xi_{pq}}$ and $\Xi_{mc}$
via~\eqref{eq:Xi}. Evaluating with $\Kc_e=\tfrac16 K\gamma^0{}_e$ and
projecting each block $M^{rs}$ on the Clifford basis as above, one finds that
the time--time and all spatial--spatial blocks vanish and only the
time--space blocks are nonzero:
\begin{equation}
  M^{00}=0,\qquad M^{ij}=0,\qquad
  M^{0i}=\kappa\,\gamma^{0i}\quad(\kappa\propto K^2),
  \label{eq:Mrs_blocks}
\end{equation}
i.e.\ in covariant form, with the trace-torsion velocity $u^a=\delta^a_0$,
\begin{equation}
  M^{rs}=\kappa\,u_a\bigl(u^{r}\gamma^{a s}-u^{s}\gamma^{a r}\bigr).
  \label{eq:Mrs_cov}
\end{equation}
This is the result quoted as~\eqref{eq:framealigned_mass}. It is \emph{not}
the uniform scalar $\Meff^2\,\eta^{rs}\mathbf 1$ (which would populate the
diagonal blocks $M^{00},M^{ij}$) and \emph{not} the Rarita--Schwinger form
$\propto\gamma^{rs}$ (which would also carry spatial--spatial blocks); the
torsional mass of the independent field is a frame-aligned operator that
couples only the time component $\Psi_0$ to the spatial components $\Psi_i$.

\paragraph{Consistency.}
Restricting~\eqref{eq:Mrs_cov} to the composite locus $\Psi_c=\gamma_c\psi$,
for which $\bar\Psi_r=\bar\psi\,\overline{\gamma_r}$, the contraction collapses
to the scalar~\eqref{eq:Sresult},
\begin{equation}
  \sum_{r,s}\overline{\gamma_r}\,M^{rs}\,\gamma_s=\tfrac23\,K^aK_a\,\mathbf 1,
\end{equation}
reproducing the composite mass and fixing the normalization $\kappa$. Two
corollaries used in the main text follow at once: a transverse, gamma-traceless
fluctuation ($\Psi_0=0$, $\gamma^i\Psi_i=0$) receives \emph{zero} mass from
$M^{rs}$, since every nonvanishing block carries the index $0$; and the entire
torsional mass is supported on the time component $\Psi_0$, the Lagrange
multiplier of Sec.~\ref{sec:constraints}.

\section{Vanishing of the naive kinetic and cross terms}
\label{app:vanish}

We expand the component form of the QSL $4$-form,
$\mathcal L=\tfrac12\,\epsilon^{pqmc}\,\overline{G_{pq}}\,\gamma_5\,G_{mc}$,
with the curl $G_{ec}=\hat D_e\Psi_c-\hat D_c\Psi_e$ and
$\hat D_e=\partial_e+\Kc_e\to in_e+\Kc_e$ in momentum space, and show that its
purely kinetic ($K^0$) and cross ($K^1$) parts vanish identically, leaving only
the algebraic mass ($K^2$) of Appendix~\ref{app:clifford}.

\paragraph{Decomposition.}
Write $G=iN+\Xi$, separating the momentum curl from the contorsion piece,
\begin{equation}
  N_{ec}=n_e\Psi_c-n_c\Psi_e,\qquad
  \Xi_{ec}=\Kc_e\Psi_c-\Kc_c\Psi_e ,
\end{equation}
both spinor-valued $2$-forms ($N$ real in $n$, $\Xi$ as in
Eq.~\eqref{eq:Xi}). Since $n_e$ is real and the Dirac conjugation reverses the
explicit factor $i$, $\overline{iN_{ec}}=-i\,\overline{N_{ec}}$, so
\begin{equation}
  \mathcal L=\tfrac12\epsilon^{pqmc}\,\overline{(iN+\Xi)}_{pq}\,\gamma_5\,(iN+\Xi)_{mc}
  =\underbrace{\tfrac12\epsilon\,\overline N\gamma_5 N}_{O(K^0)}
  +\underbrace{\tfrac{i}{2}\epsilon\bigl(\overline\Xi\gamma_5 N-\overline N\gamma_5\Xi\bigr)}_{O(K^1)}
  +\underbrace{\tfrac12\epsilon\,\overline\Xi\gamma_5\Xi}_{O(K^2)} .
  \label{eq:Lorders}
\end{equation}

\paragraph{The kinetic term ($K^0$) vanishes.}
Expanding $N_{pq}=n_p\Psi_q-n_q\Psi_p$,
\begin{equation}
  \epsilon^{pqmc}\,\overline{N_{pq}}\,\gamma_5\,N_{mc}
  =4\,\epsilon^{pqmc}\,n_p\,n_m\,\overline{\Psi_q}\,\gamma_5\,\Psi_c ,
\end{equation}
after using the antisymmetry of $\epsilon$ to combine the four cross terms.
Each surviving term carries the two momentum factors $n_p n_m$ on the index
pair $(p,m)$ over which $\epsilon^{pqmc}$ is antisymmetric, while $n_p n_m$ is
symmetric; hence
\begin{equation}
  \epsilon^{pqmc}\,n_p\,n_m=0 \quad\Longrightarrow\quad \mathcal L\big|_{K^0}=0 .
  \label{eq:K0vanish}
\end{equation}
The naive kinetic $4$-form is identically zero---it is the boundary term
$d[\cdots]$ of the spinor--curvature identity~\eqref{eq:identity}, carrying no
bulk dynamics.

\paragraph{The cross term ($K^1$) vanishes.}
The $\gamma_5$ pairing of two spinor-valued $2$-forms is
symmetric~\cite{NesterTung1995},
\begin{equation}
  \epsilon^{pqmc}\,\overline{X_{pq}}\,\gamma_5\,Y_{mc}
  =\epsilon^{pqmc}\,\overline{Y_{pq}}\,\gamma_5\,X_{mc},
  \label{eq:g5sym}
\end{equation}
for any even-form $X,Y$ (it is the same symmetry that makes the QSL action
quadratic and underlies the field equations of Ref.~\cite{TungCQG2026}).
Applied with $X=\Xi$, $Y=N$, identity~\eqref{eq:g5sym} makes the two terms in
the $O(K^1)$ bracket of~\eqref{eq:Lorders} equal, so their difference---and
hence the cross term---vanishes,
\begin{equation}
  \mathcal L\big|_{K^1}
  =\tfrac{i}{2}\,\epsilon^{pqmc}\bigl(\overline{\Xi_{pq}}\gamma_5 N_{mc}
  -\overline{N_{pq}}\gamma_5\Xi_{mc}\bigr)=0 .
  \label{eq:K1vanish}
\end{equation}
The opposite signs in the bracket originate in the single factor $i$ carried by
the momentum curl: the reality convention that renders $2D\Psi\gamma_5 D\Psi$
real is precisely what cancels the term linear in both $n$ and the torsion.

\paragraph{Result.}
Only the derivative-free $O(K^2)$ term survives, reproducing the algebraic
mass kernel of Appendix~\ref{app:clifford}. Thus the naive second-order action
$2D\Psi\gamma_5 D\Psi$ supplies neither a kinetic nor a cross term for the
independent field: its entire content is the non-dynamical torsional mass,
confirming Pillar~II of Sec.~\ref{sec:kinetic_vanish}. Equations
\eqref{eq:K0vanish}--\eqref{eq:K1vanish} were confirmed by direct symbolic
expansion of~\eqref{eq:Lorders} in a Dirac basis.

\section{Second variation, the Lichnerowicz operator, and the mode scan}
\label{app:factor}

This appendix makes the factorized form~\eqref{eq:factorization} explicit and
describes the construction of the inverse propagator whose null-space scan is
Table~\ref{tab:nulls}.

\paragraph{The two maps.}
With the composite background $\Psi^0_\mu=\gamma_\mu\psi_0$,
$\bar\psi_0\psi_0=1$, the linear responses~\eqref{eq:hmap}--\eqref{eq:dPhimap}
are the algebraic, $k$-independent maps
\begin{align}
  h_{\mu\nu}&=\bar\Psi^0_{(\mu}\delta\Psi_{\nu)}+\mathrm{c.c.}
   \ \equiv\ (T\,\delta\Psi)_{\mu\nu},
   &\dim_{\mathbb R}\mathrm{im}\,T&=10,\quad \dim_{\mathbb R}\ker T=22,
   \label{eq:Tmap}\\
  \delta\Phi&=\bar\psi_0\,\tfrac14\gamma^\mu\delta\Psi_\mu+\mathrm{c.c.}
   \ \equiv\ p\cdot\delta\Psi,
   &\mathrm{supp}\,p&=\{\zg=\gamma^\mu\delta\Psi_\mu\}.
   \label{eq:pmap}
\end{align}
Over the real components of $\delta\Psi$ the map $T$ reaches \emph{all} ten
metric components ($\dim_{\mathbb R}\mathrm{im}\,T=10$); its
$22$-real-dimensional kernel---the directions of $\delta\Psi$ that leave the
metric unchanged---is the algebraic origin of the non-dynamical (auxiliary)
sector, and $p$ is supported on the gamma-trace alone. (The inverse propagator
below is assembled in the holomorphic mode basis of Table~\ref{tab:nulls},
where the corresponding complex map has rank $9$; this is a bookkeeping choice
and does not affect the conclusion.)

\paragraph{Linearized Einstein operator.}
The first term of~\eqref{eq:factorization} is, in momentum space, the
linearized Einstein tensor $\mathcal G^{(1)}_{\mu\nu}(h)=\mathcal E(k)\,h$,
\begin{equation}
  \mathcal G^{(1)}_{\mu\nu}
  =-\tfrac12\Bigl[k^2 h_{\mu\nu}-k_\mu k^\alpha h_{\alpha\nu}
   -k_\nu k^\alpha h_{\alpha\mu}+k_\mu k_\nu h
   -\eta_{\mu\nu}\bigl(k^2 h-k^\alpha k^\beta h_{\alpha\beta}\bigr)\Bigr],
  \quad h\equiv h^\alpha{}_\alpha,
  \label{eq:G1}
\end{equation}
which is transverse and gauge invariant: for $h_{\mu\nu}=k_\mu\xi_\nu+k_\nu\xi_\mu$
one finds $\mathcal G^{(1)}_{\mu\nu}=0$ identically, the linearized
diffeomorphism invariance. Its principal part is $k^2$, with no zeroth-order
term.

\paragraph{The scalar--tensor vertex.}
The second term of~\eqref{eq:factorization} is the mixing
$-2\,\delta\Phi\,\delta R$ with the linearized Ricci scalar
\begin{equation}
  \delta R=-k^\mu k^\nu h_{\mu\nu}+k^2 h\ \equiv\ r(k)\cdot\delta\Psi ,
  \label{eq:dR}
\end{equation}
again homogeneous of degree two in $k$. Assembling, the inverse propagator on
the sixteen components of $\delta\Psi$ is the Hermitian form
\begin{equation}
  \mathcal K(k)=T^\dagger\,\mathcal E(k)\,T
  \;-\;\bigl(p^\dagger r(k)+r(k)^\dagger p\bigr)\;+\;\mathcal M,
  \label{eq:Kfull}
\end{equation}
with $\mathcal M$ the algebraic torsional mass of Appendix~\ref{app:clifford}
($k$-independent, supported on $\Psi_0$). Because $\mathcal E(k)$, $r(k)$ are
degree-two and $\mathcal M$ degree-zero in $k$, the only $k$-dependence of
$\mathcal K$ is the homogeneous degree-two kinetic part: its characteristic
variety is contained in $\{k^2=0\}$, the content of Theorem~\ref{thm:nogo}.

\paragraph{Mode scan.}
We computed $\dim\ker\mathcal K(k)$ symbolically over a grid of $k$ and several
backgrounds. Off the cone the null space is the union of the
metric-invisible (auxiliary) directions of $\ker T$ and the gravitational
gauge/constraint directions; on the cone $k^2=0$ an additional massless graviton mode appears,
raising the null dimension by the jump recorded in Table~\ref{tab:nulls}.
Crucially, the null dimension is identical for spacelike and timelike $k^2$ and
jumps \emph{only} at $k^2=0$: there is no timelike (massive) pole. Including
$\mathcal M$ leaves the scan unchanged, since the mass acts only within the
non-propagating $\Psi_0$ sector. The generic null value ($9$ or $10$) depends
on the orientation of $\psi_0$ relative to $\hat k$ and counts the gauge plus
non-dynamical directions; the invariant statement---no jump at timelike
$k^2$---is background- and direction-independent, in agreement with the
analytic Theorem~\ref{thm:factorization}, whose factorization holds at any
$R_0=0$ background.

\paragraph{Remark on rigor.}
The scan is a corroboration, not the proof: the no-go rests on
Theorem~\ref{thm:factorization} (an identity, valid for arbitrary background)
together with the masslessness of $\mathcal Q$. The explicit construction here
fixes a particular $\psi_0$ and momentum direction only to exhibit the mode
content concretely; the table samples three inequivalent backgrounds and three
directions to confirm the orientation-independence of the conclusion. All
null-dimension computations were performed by computer algebra.

\end{document}